\newtheorem{lemma}{Lemma}
\newtheorem{theorem}{Theorem}
\begin{document}
\title{Quantum algorithm for learning  secret strings and its experimental demonstration}
\author{  Yongzhen Xu\thanks{These authors contributed equally to this work.}, Shihao Zhang\footnotemark[1] \thanks{ zhangshh63@mail.sysu.edu.cn (S. Zhang), lilvzh@mail.sysu.edu.cn (L. Li) }  ~and Lvzhou Li \footnotemark[2] \\
\small{{\it Institute of Quantum Computing and Computer Theory,}}\\
\small{{\it School of Computer Science and Engineering,}}\\
\small {{\it  Sun Yat-sen University, Guangzhou 510006, China}}}
\date{\today }

\maketitle
\begin{abstract}
In this paper, we consider the   secret-string-learning  problem in the teacher-student setting: the teacher has a secret  string $s\in {{\{0,1\}}^{n}}$, and the student wants to learn the secret $s$ by question-answer interactions with the teacher, where  at each time, the student can ask the teacher with  a pair
 $(x, q) \in {{\{0,1\}}^{n}}\times\{0,1,\cdots, n-1\}$ and the teacher returns a bit given by  the oracle $f_{s}(x,q)$ that indicates  whether the length of the longest common prefix  of $s$ and $x$ is greater than $q$ or not.  Our contributions are as follows.
 \begin{itemize}
  \item[(i)] We prove that  any classical deterministic algorithm needs at least $n$ queries   to the oracle $f_{s}$ to learn the $n$-bit secret string $s$  in both the worst case and  the average case, and also present an optimal classical deterministic algorithm  learning any  $s$ using $n$ queries.
  \item[(ii)] We obtain a quantum algorithm  learning the $n$-bit secret string $s$ with certainty using $\left\lceil n/2\right\rceil$ queries to the  oracle $f_s$, thus proving a double speedup over classical counterparts.
  \item[(iii)] Experimental demonstrations of our quantum algorithm on the IBM cloud quantum computer  are presented, with average success probabilities of $85.3\%$ and  $82.5\%$ for  all cases with $n=2$ and $n=3$ , respectively.
\end{itemize}

\end{abstract}

\section{Introduction}

Strings are one of the most basic structures in mathematics and computer science.
A string is an abstract data structure consisting of a sequence of zero or more letters over a non-empty finite alphabet. The study of string processing methods is a fundamental concern in computer science, which turns out to have a wide range of applications in areas such as information theory, artificial intelligence, computational biology and linguistics.
String problems have been extensively studied in classical computing for decades, and numerous effective algorithms to deal with them have been proposed, including exact string  matching \cite{Knuth1977,Boyer1977},
finding patterns in a string \cite{Manacher1975,Apostolico1995}, and many others ~\cite{Crochemore2007}.

Naturally, string problems have also attracted much attention from the quantum computing community over the past two decades. In 2003, a quantum algorithm for exact string matching by Ramesh and Vinay provided a near-quadratic speedup over the fastest known classical algorithms~\cite{Hariharan2003}. In 2017 Montarano developed a  quantum algorithm that is super-polynomially faster than the best possible classical ones for $d$-dimensional pattern matching problem on average-case inputs~\cite{Montanaro2017}. In 2021, Niroula and Nam designed a quantum pattern matching algorithm from the perspective of circuit implementation \cite{Niroula2021}.
Recently, some other string problems have also been investigated by focusing on novel quantum algorithms \cite{Boroujeni2021,  Gall2022,  Akmal2022}.

As one of the most important string  problems, string learning  has attracted much interest in  both the classical and quantum computing  models since it has interesting applications in data mining and cyber  security. Typically, there are two parties called a teacher and a student. The teacher has a secret bit string $s$ of length $n$ and the student wants to identify this secret string
by asking  a certain number of queries to an oracle that answers some piece of information of $s$. The goal is to learn $s$ using as few queries as possible.
To date, there have been many wonderful results in quantum and classical settings for this problem with  different oracle types,  including  the index oracle \cite{Dam1998}, inner product oracle \cite{BV1997},
substring oracle \cite{Skiena1995,Cleve2012}, balance oracle in counterfeit coin problem\cite{Iwama2012}, and subset OR oracle in combinatorial group testing \cite{Du2000,Ambainis2014,Belovs2015}.

In this paper, we explore  the power of another oracle  as the length of the Longest Common Prefix (LCP) of two strings,  which is a well-known string similarity metric used in  data structures and algorithms \cite{Kasai2001,Bonizzoni2021}. Afshani et al. \cite{Afshani2009} used this oracle to consider the hidden permutation problem that comes from Mastermind game and evolutionary computation. To the best of our knowledge, there has not been any work related to string learning based on this oracle in the  quantum setting.  Hence, in this paper we consider quantum algorithms based on the LCP information to learn a secret string. More specifically,  the problem  is described  in the teacher-student setting: the teacher has a secret bit  string $s\in {{\{0,1\}}^{n}}$, and the student wants to learn the secret $s$  using as few queries to the teacher as possible, where  at each time, the student can ask the teacher with  a pair
 $(x, q) \in {{\{0,1\}}^{n}}\times\{0,1,\cdots, n-1\}$ and the teacher returns a bit given by  the oracle $f_{s}(x,q)$ that indicates  whether the length of the LCP  of $s$ and $x$ is greater than the integer $q$ or not. We manage to present a quantum algorithm that can offer an advantage over the best classical algorithm for solving this problem.

The main results of this paper are stated as follows. First, we prove that   any classical deterministic algorithm needs at least $n$ queries to
$f_{s}$ to learn the $n$-bit secret string $s$  in both the worst case and  the average case, and also  present a classical deterministic algorithm learning any $s$  with exact $n$ queries as an optimal one.
Second, we propose a quantum algorithm that learns  the $n$-bit secret string $s$ with certainty using $\left\lceil n/2\right\rceil$ queries to the  oracle $f_s$, thus proving a double speedup over classical counterparts.
Third, we demonstrate our quantum algorithm on the IBM cloud quantum computer using quantum circuit synthesis, compilation and optimization techniques. In the experiment, the average success probabilities for all cases with $n=2$ and $n=3$ achieve $85.3\%$ and $82.5\%$, respectively, which show the feasibility of implementing our quantum algorithm. Finally, we conclude this paper and put forward an interesting open problem that is worthy of further study.

\section{Results}\label{Results}

\subsection{The problem: learning a secret string}
We can describe the problem (learning a secret string) in the teacher-student setting: the teacher has a $n$-bit secret  string $s\in {{\{0,1\}}^{n}}$, and the student wants to learn the secret $s$ by question-answer interactions with the teacher. In this paper, we suppose that at each time, the student can ask the teacher with  a pair
 $(x, q) \in {{\{0,1\}}^{n}}\times\{0,1,\cdots, n-1\}$, and the teacher returns a bit given by the oracle
  \begin{equation}\label{fsqx}
    f_{s}(x,q) :=
    \begin{cases}
     0, & lcp(s,x)\le q;\\
       1, & lcp(s,x)>q,
    \end{cases}
\end{equation}
where
 \begin{equation}\label{Ksx}
  lcp(s,x):=\max \{i\in \{0,1,\cdots,n\}| \forall j\le i:{{x}_{j}}={{s}_{j}}\}
\end{equation}
 represents the length of the Longest Common Prefix (LCP) of $s$ and $x$.

When designing algorithms for solving this problem,
we hope to query the oracle ${{f}_{s}}$  as little as possible.

\subsection{Optimal classical algorithm}
  In this section, we propose a classical deterministic  algorithm for the secret string learning problem defined above, and further prove its optimality in terms of both the worst-case and the average-case query complexity.

\begin{algorithm}[htp]\label{alg1}
    \caption{Classical algorithm for learning secret string $s\in {{\{0,1\}}^{n}}$}
    \label{algorithm:test}
    \LinesNumbered
    \KwIn { A query oracle $f_{s}$ defined in Eq.~\eqref{fsqx}.}
    \KwOut {The secret string $s$.}
    $x\leftarrow0^n$.

    \For{$q = 0$ to $n-1$}{
        \If{$f_s(x,q) =0 $} {
            $x_{q+1} \leftarrow  x_{q+1}\oplus 1 $ ;

        }

    }
    \textbf{return} $s \leftarrow x$;
\end{algorithm}

\begin{theorem}\label{theorem1}
(1) There is a classical deterministic algorithm  learning any $n$-bit secret string $s$ using $n$ queries to the oracle $f_{s}$; (2) Any classical deterministic algorithm needs at least $n$ queries  to learn the secret string $s$  in both the worst case and  the average case.
\end{theorem}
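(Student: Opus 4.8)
The plan is to handle part (1) by a direct correctness analysis of Algorithm~\ref{alg1}, and part (2) by an adversary/decision-tree argument combined with Kraft's inequality.

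For part (1), I would fix the secret string $s$ and prove the following loop invariant for Algorithm~\ref{alg1}: immediately before the iteration with index $q$ (for $q=0,1,\dots,n-1$) the current word $x$ satisfies $lcp(s,x)\ge q$, equivalently $x_1\cdots x_q=s_1\cdots s_q$. The base case $q=0$ is trivial since $lcp(s,x)\ge 0$ always. For the inductive step, at iteration $q$ the student queries $f_s(x,q)$. If the answer is $1$, then $lcp(s,x)>q$, hence $x_{q+1}=s_{q+1}$ and the algorithm correctly leaves $x$ unchanged; if the answer is $0$, then $lcp(s,x)\le q$, which together with the invariant forces $lcp(s,x)=q$ and therefore $x_{q+1}\ne s_{q+1}$, so flipping $x_{q+1}$ yields $x_1\cdots x_{q+1}=s_1\cdots s_{q+1}$ without disturbing the first $q$ bits. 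After the final iteration $q=n-1$ the invariant gives $x=s$, and the loop has issued exactly one query per iteration, i.e.\ $n$ queries in total. This proves claim (1).

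For part (2), I would model an arbitrary correct deterministic algorithm as a finite binary decision tree $T$: each internal node is a query $(x,q)$ and branches on the returned bit $f_s(x,q)\in\{0,1\}$, and each leaf carries an output string. Because each query is determined by the sequence of answers seen so far, every secret string $s\in\{0,1\}^n$ drives the algorithm along a unique root-to-leaf path; let $d(s)$ be the number of queries on that path. Correctness forces $s\mapsto(\text{leaf of }s)$ to be injective, since two strings reaching the same leaf would get the same output. The answer-sequences labelling distinct reached leaves are pairwise incomparable (they first differ at the node where the two paths split), so they form a prefix-free subset of $\{0,1\}^{*}$, and Kraft's inequality gives $\sum_{s\in\{0,1\}^n}2^{-d(s)}\le 1$. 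Applying Jensen's inequality to the concave function $\log_2$,
\[
\frac{1}{2^{n}}\sum_{s}d(s)\;=\;-\frac{1}{2^{n}}\sum_{s}\log_2 2^{-d(s)}\;\ge\;-\log_2\!\Bigl(\frac{1}{2^{n}}\sum_{s}2^{-d(s)}\Bigr)\;\ge\;-\log_2 2^{-n}\;=\;n ,
\]
so the average-case query complexity is at least $n$; since the worst case dominates the average, the worst-case bound of $n$ follows too (equivalently, a depth-$d$ binary tree has at most $2^d$ leaves while at least $2^n$ are needed, forcing $d\ge n$).

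The routine parts are the invariant bookkeeping in (1) and the invocation of Kraft and Jensen in (2). The point needing care is the reduction in (2): justifying that the binary-decision-tree picture faithfully captures every deterministic strategy, including adaptive and wasteful ones, and in particular establishing the injectivity of $s\mapsto$ leaf and the prefix-freeness of the realized answer-strings; once these are pinned down the quantitative estimate is immediate. I would close by noting that, since Algorithm~\ref{alg1} uses exactly $n$ queries on every input (its decision tree is the complete binary tree of depth $n$), it is simultaneously worst-case and average-case optimal, with no gap between the two measures.
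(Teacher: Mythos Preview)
Your proposal is correct. Part~(1) and the worst-case half of part~(2) match the paper's argument essentially verbatim, though you spell out the loop invariant more carefully than the paper does.

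For the average-case lower bound you take a different route. The paper invokes a closed-form result from Knuth: the minimum external path length of a binary tree with $N$ leaves equals $N(\log_2 N + 1 + \gamma - 2^{\gamma})$ where $\gamma=\lceil\log_2 N\rceil-\log_2 N$, and then specializes to $N=2^n$, $\gamma=0$. You instead derive the bound directly from Kraft's inequality and Jensen applied to the concave function $\log_2$. Your argument is more self-contained (no external citation needed) and works uniformly; the paper's cited formula is sharper when $N$ is not a power of two, but that extra precision is not used here. Both routes rest on the same decision-tree modelling of deterministic adaptive algorithms, so the ``point needing care'' you flag---injectivity of $s\mapsto\text{leaf}$ and prefix-freeness of the answer strings---is exactly the implicit step the paper also relies on when it asserts the tree must have $2^n$ leaves.
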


\begin{proof}
For solving the problem, we propose a classical deterministic  algorithm  named Algorithm~\ref{alg1}, which starts from querying $(x=0^n,q=0)$ and then assigns the query data in each step  depending on the results from previous queries by Eq.\eqref{fsqx}. In this way,  each time one bit of the secret string $s$ is identified using one query according to the previous query outcomes, and as a result the total query complexity is exactly $n$ for any secret $s$.  More intuitively, the whole process of Algorithm~\ref{alg1}  can be described as a binary decision tree with height $n$ indicating its query complexity, and we present the case with  $n=3$  in Figure~\ref{fig1} for illustration.

Now  we  prove the lower bound for the query complexity of this problem (i.e. in the worst case) by information-theoretic argument. In general, any classical deterministic  algorithm for this problem can be described in terms of a binary decision tree that queries different $(x,q)$ in each internal node and identifies secret $s$ in each leaf node (i.e. external node), with its height being the query complexity of the algorithm.
Considering the fact that any binary tree with height $h$ has at most $2^h$ leaf nodes, we conclude that the height of any binary tree with total $2^n$ leaf nodes is at least $n$. That is,
  any classical deterministic   algorithm for the secret learning problem needs at least $n$ queries in the worst case.

 Moreover, we can prove the average-case query complexity of any classical deterministic  algorithm is no less than $n$.
 Note that the path length from the root node to a leaf node in a binary tree  indicates the number of queries to identify a secret string $s$ in an algorithm,
 our proof can be derived from  a fact about binary trees revealed as follows.
The external path length (EPL) of a tree is defined as the sum of path lengths of all its leaf nodes, and the minimum value of the EPL of a binary decision tree  with $N$ leaves is known as
$N(\log_2 N  +1+\gamma -2^{\gamma})$ with $\gamma= \left\lceil \log_2 N\right\rceil - \log_2 N \in [0,1)$ \cite{Knuth1998}. Hence, the minimum average path length of all $N$ leaf nodes is $\log_2 N  +1+\gamma -2^{\gamma}$. Here in our case we consider binary trees corresponding to classical query algorithms with $N=2^n$ and $\gamma=0$, where the minimum EPL is $n2^n$ and  the minimum average-case query complexity is $n$.

 In summary, our classical Algorithm~\ref{alg1} achieves optimality in terms of both worst- and average-case query complexity.
\end{proof}

\begin{figure}[htp]
    \centering
    \includegraphics[width=0.85\textwidth]{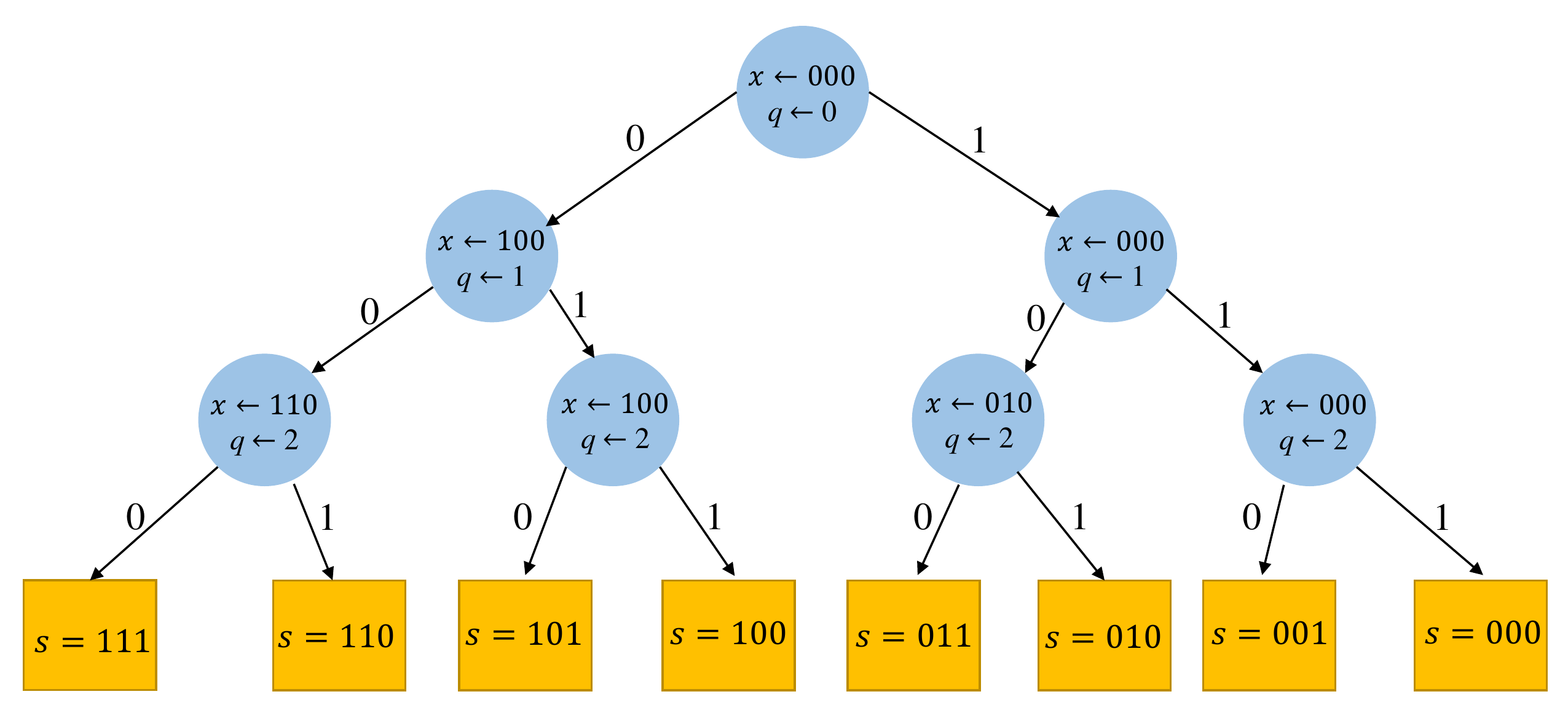}
    \caption{Binary decision tree for illustrating classical  Algorithm~\ref{alg1} with  $n=3$. The value 0 or 1 along each edge is the outcome after querying $(x,q)$ in blue circle nodes under a certain secret  $s\in {\{0,1\}^{3}} $ by Eq.~\eqref{fsqx}, and any secret string $s$ can be identified in an orange square node (i.e. leaf node) after three queries.}
    \label{fig1}
\end{figure}

\subsection{Quantum algorithm with speedup}

In this section, we present a quantum algorithm  that shows a speedup over any classical deterministic algorithm for learning the secret string $s$.

The  quantum oracle associated with  $f_s(x,q)$ in Eq.~\eqref{fsqx} is a quantum unitary operator, ${{O}_{s}}$, defined by its action on the
computational basis:
\begin{equation}\label{qxy}\left| x,q \right\rangle \left| y \right\rangle \xrightarrow{{{O}_{s}}}\left| x,q \right\rangle \left| y\oplus {{f}_{s}}(x,q) \right\rangle,
\end{equation}
where $\left| x,q \right\rangle$ is the query register and $\left| y \right\rangle$ is a single oracle qubit. Therefore, when we initialize the oracle qubit as $\left| y \right\rangle =\left| - \right\rangle = (\left| 0 \right\rangle -\left| 1 \right\rangle )/\sqrt{2} $, we have
\begin{equation}
 \left| x,q \right\rangle \left| - \right\rangle \xrightarrow{{{O}_{s}}}
    \begin{cases}
     \left| x,q \right\rangle \left| - \right\rangle, & {f_s}(x,q)=0;\\
       -\left| x,q \right\rangle \left| - \right\rangle, & {f_s}(x,q)=1,
    \end{cases}
\end{equation}
which can be summarized as
\begin{equation}
\left| x,q \right\rangle \left| - \right\rangle \xrightarrow{{{O}_{s}}}{{(-1)}^{{{f}_{s}}(x,q)}}\left| x,q \right\rangle \left| - \right\rangle.
\end{equation}
Note the state of the oracle qubit remains unchanged in this process, and thus can be omitted in the description of the action of  oracle $O_s$ for convenience as
\begin{equation}\label{qxOs}
   \left| x,q \right\rangle \xrightarrow{{O}_{s}}{{(-1)}^{{{f}_{s}}(x,q)}}\left| x,q \right\rangle.
\end{equation}

In fact, similar ¡°phase kickback¡± effects \cite{Cleve1998} and associated oracle simplifications have been explored and exploited in adapting some well-known quantum query algorithms, including Deutsch-Jozsa algorith \cite{Deutsch1993,Collins1998}, Bernstein-Vazirani algorithm \cite{Bernstein1997,Du2001}, and Grover search algorithm \cite{Grover1996,Figgatt2007}. In the following, we can directly design quantum query algorithms using the quantum oracle $O_s$ in Eq.~\eqref{qxOs} when all other involved operators  only act on the query register.

\begin{figure}[htp]
    \centering
    \includegraphics[width=0.95\textwidth]{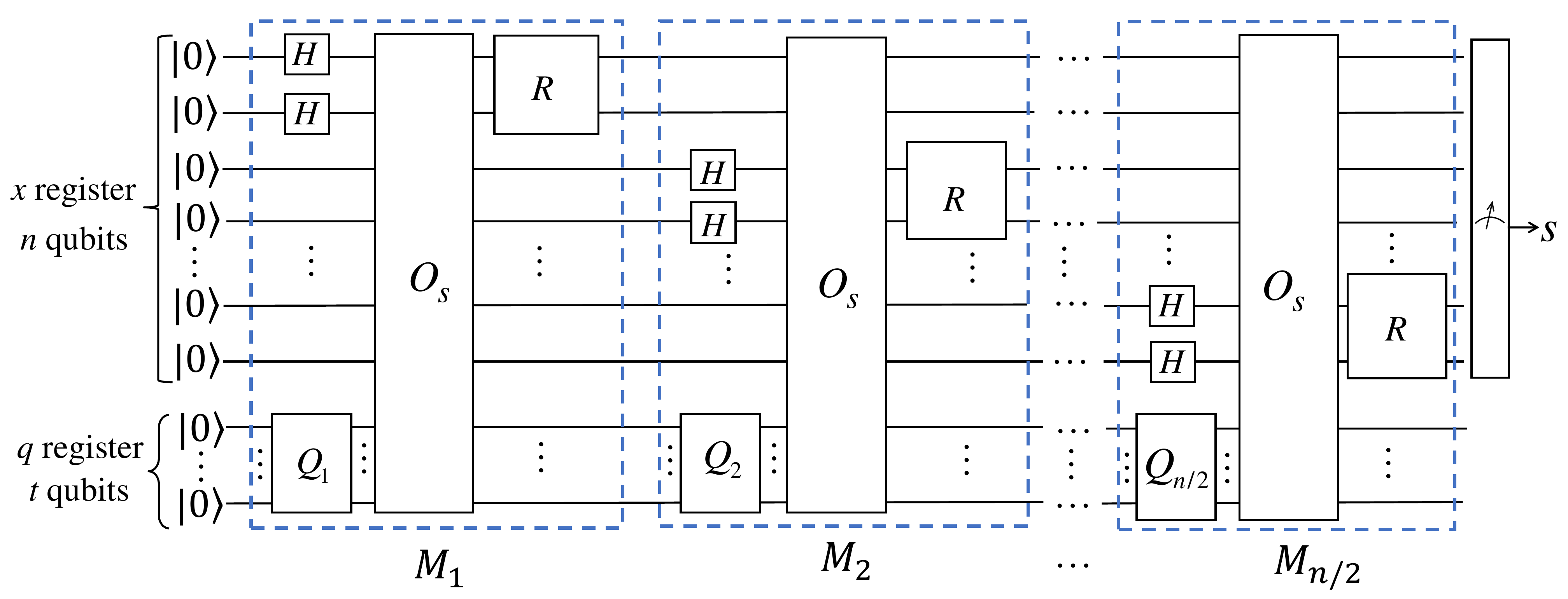}
    \caption{Schematic of the overall circuit for quantum learning algorithm with  even $n$, which applies $n/2$ subroutines $M_i$ ($i$=1,2,...,n/2) to the $n$-qubit $x$ register together with the $t$-qubit $q$ register initialized as $\left| {0}^{(n+t)} \right\rangle$
    . The working principle of each $M_i$ that consists of a quantum oracle $O_s$ in Eq.~\eqref{qxOs} and several fixed operators is illustrated in Lemma \ref{Mi}, and  the output of the final measurement is exactly the target secret string $s$.}
    \label{figQuanAL}
\end{figure}

\begin{theorem}\label{theorem2}
There is a quantum algorithm  learning the $n$-bit secret string $s$ with certainty using $\left\lceil n/2\right\rceil$ queries to the  oracle $f_s$.

\end{theorem}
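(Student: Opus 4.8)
The plan is to design a quantum subroutine $M_i$ that, with a single query to $O_s$, determines two consecutive bits of the secret string $s$, and then to chain $\lceil n/2\rceil$ such subroutines together so that the total query count is $\lceil n/2\rceil$. The guiding intuition is the following: if I already know the prefix $s_1 s_2 \cdots s_{2(i-1)}$ from the previous $i-1$ subroutines, then the next two bits $s_{2i-1} s_{2i}$ can take only four values, and a single phase-kickback query to $O_s$ — queried at the appropriate value of $q$ and on a suitable superposition over $x$ — should encode exactly which of the four possibilities holds. The point is that the oracle $f_s(x,q)$ with $q = 2i-2$ answers whether $\mathrm{lcp}(s,x) > 2i-2$, i.e.\ whether $x$ agrees with $s$ on position $2i-1$ (given that the known prefix is already hardcoded into $x$), and with $q = 2i-1$ it further tests position $2i$; by preparing a 2-qubit register in a uniform superposition over the four candidate extensions and calling the oracle, the sign pattern $(-1)^{f_s(x,q)}$ imprints a 2-bit Bernstein--Vazirani--style label that a fixed Hadamard-type transform can decode with certainty.

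The concrete steps I would carry out are: (1) State and prove Lemma~\ref{Mi}, which specifies $M_i$ precisely — the fixed pre-oracle operator that loads the known prefix into the $x$-register and creates the needed superposition over positions $2i-1,2i$ and over the control value $q$, the single call to $O_s$ in Eq.~\eqref{qxOs}, and the fixed post-oracle operator (essentially Hadamards plus classically-controlled corrections) that writes $s_{2i-1}s_{2i}$ into the output register and uncomputes the ancillae. The analysis here is just tracking the phases $(-1)^{f_s(\cdot,\cdot)}$ through Hadamard layers, exactly as in the Bernstein--Vazirani analysis cited in the excerpt. (2) Compose $M_1,\dots,M_{\lceil n/2\rceil}$: each $M_i$ uses the bits learned by $M_1,\dots,M_{i-1}$ to fix its prefix, so the subroutines run sequentially, each contributing exactly one oracle query, for a total of $\lceil n/2\rceil$ queries. (3) Handle the parity of $n$: for even $n$ the circuit is exactly Figure~\ref{figQuanAL}; for odd $n$ the last subroutine only needs to learn a single remaining bit $s_n$, which it does with one query (using $q=n-1$), so the total is $(n-1)/2 + 1 = \lceil n/2\rceil$. (4) Conclude correctness with certainty, since every $M_i$ outputs its two (or one) bits deterministically.

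The main obstacle I expect is getting $M_i$ to extract \emph{two} bits from a \emph{single} oracle call while keeping all non-oracle operators fixed (independent of the previously learned bits, except for a classically-controlled prefix-loading step). The subtlety is that $f_s(x,q)$ is not linear in $(x,q)$ the way the Bernstein--Vazirani inner-product oracle is — it is a threshold on the LCP — so I cannot directly invoke the standard BV circuit. The trick will be to query a superposition over the \emph{two} relevant thresholds $q\in\{2i-2,2i-1\}$ simultaneously (hence the $t$-qubit $q$-register in Figure~\ref{figQuanAL} with $t=\lceil\log_2 n\rceil$): restricted to $x$-strings that match the known prefix, the map $(x_{2i-1},x_{2i},q)\mapsto f_s(x,q)$ does become an affine function of the two bits $x_{2i-1}\oplus s_{2i-1}$ and $x_{2i}\oplus s_{2i}$ and the $q$-indicator, so a BV-type decoding applies on that restricted subspace. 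Verifying that the phases collapse correctly on this subspace — and that the garbage on the rest of the $q$-register can be cleanly uncomputed — is the technical heart of the argument; once Lemma~\ref{Mi} is in hand, the composition and the query count are immediate.
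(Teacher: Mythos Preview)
Your overall architecture --- chain $\lceil n/2\rceil$ subroutines $M_i$, each consuming one oracle call and revealing $s_{2i-1}s_{2i}$, with a final single-bit query when $n$ is odd --- matches the paper exactly. The gap is inside $M_i$: your proposed mechanism does not work as stated.

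You claim that, restricted to $x$-strings carrying the known prefix, the map $(x_{2i-1},x_{2i},q)\mapsto f_s(x,q)$ is affine in $a:=x_{2i-1}\oplus s_{2i-1}$, $b:=x_{2i}\oplus s_{2i}$, and a bit indicating $q\in\{2i-2,2i-1\}$. It is not. For $q=2i-2$ one has $f=\bar a$, while for $q=2i-1$ one has $f=\bar a\wedge\bar b$; the second is an AND, not an XOR, and no affine form over $\mathrm{GF}(2)$ reproduces both rows simultaneously (e.g.\ matching the $q=2i-2$ row forces $f=1\oplus a$, which then wrongly predicts $f(a{=}0,b{=}1,q{=}2i{-}1)=1$). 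Consequently the four post-oracle sign vectors you would obtain by superposing over both $q$-values and both $x$-bits are not mutually orthogonal, so no fixed ``Hadamard-type'' unitary can decode them with certainty, and the uncomputation of the $q$-register you allude to cannot succeed cleanly.

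The paper sidesteps this entirely: it keeps the $q$-register in the \emph{basis state} $|2i-1\rangle$ (the operator $Q_i$ just increments it classically between rounds) and superposes only the two $x$-qubits at positions $2i-1,2i$. With $q=2i-1$ fixed, $f_s(x^{(k)},2i-1)=1$ for exactly one of the four values $k\in\{0,1\}^2$, namely $k=s_{2i-1}s_{2i}$. So the oracle flips the sign of a single marked item in a uniform superposition over four items --- a one-out-of-four Grover instance, solved exactly by one application of the diffusion operator $R=2|\psi\rangle\langle\psi|-I$ in Eq.~\eqref{Rmatrix}. Replace your Bernstein--Vazirani step with this Grover step and the rest of your outline goes through.
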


\begin{proof}
 We   describe our quantum  algorithm for addressing  the case with even $n$ as depicted in Figure~\ref{figQuanAL}, and the odd $n$ cases can be handled in a similar way. The essence of our algorithm is to identify two bits of any $n$-bit secret string $s$ at a time by applying one quantum oracle  combined with other operators to a certain quantum superposition state, and thus the total query number is $n/2$ throughout the algorithm.

For even $n$ and  $t=\left\lceil {\log _2}(n) \right\rceil $, the input data $(x,q)$ is encoded by  a $x$ register with $n$ qubits combined with a $q$ register with $t$ qubits, and the whole quantum circuit  consists of $n/2$ subroutines $\{{M_i:i=1,2,...,n/2}\}$, where each $M_i$ can be broken into  four steps:

(1) Apply two Hadamard gates ${H}^{\otimes 2}$ to qubits $2i-1$ and $2i$ in the $x$ register.

(2) Apply an operator
$Q_i$
to the $q$ register
that can transform a $t$-qubit state  $\left| {q}^{(i-1)} \right\rangle$ into $\left| {{q}}^{(i)} \right\rangle$ with $q^{(0)}=0$ and $q^{(i)}=2i-1$ for $i\ge1$.

(3) Apply the quantum oracle operator $O_{s}$ described in 	Eq.~\eqref{qxOs} to all $(n+t)$ qubits.

(4) Apply a $4\times 4$ unitary operator
\begin{equation}\label{Rmatrix}
   R=\frac{1}{2} \left( \begin{matrix}
   -1 & 1 & 1 & 1  \\
   1 & -1 & 1 & 1  \\
   1 & 1 & -1 & 1  \\
   1 & 1 & 1 & -1  \\
\end{matrix} \right)
\end{equation} to qubits $2i-1$ and $2i$.

As a result, the application of $M_1$, $M_2$, \dots,  $M_{n/2}$ on the initial input state ${\left| {0}^{n+t} \right\rangle }$ would produce the final output state $\left| x=s \right\rangle $ $\left|q=n-1 \right\rangle $, and thus the target string $s$ can be identified by measuring the $n$-qubit $x$ register in the computational basis. To explicitly explain the working principle of the above quantum algorithm, we reveal the overall effect of each  subroutine $M_i$ in Lemma \ref{Mi} in detail.

\begin{lemma}
\label{Mi}
Denote the integer $q^{(0)}=0$,  ${{q}^{(i)}}=2i-1(i=1,2,...,n/2)$, and ${{s}}^{(i)}={{s}_{1}}{{s}_{2}}..{{s}_{q^{(i)}-1}}$ as the $(2i-2)$-bit prefix of $s$ (${{s}}^{(1)}=\varnothing $ ). Then each subroutine $M_i$ ($i$=1,2,...,n/2) consisting of $H_{2i-1}, H_{2i}$, $Q_i$, $O_s$, and $R$ as shown in Figure~\ref{figQuanAL} has the effect
\begin{equation}\label{Mi:}
    M_i:{\left| {\psi _{0}} \right\rangle} =\left|x= {s}^{(i)}000^{n-2i} \right\rangle \left|q= {q}^{(i-1)} \right\rangle{\to} \left|x= {s}^{(i)}s_{2i-1}s_{2i}0^{n-2i} \right\rangle \left|q= {q}^{(i)} \right\rangle.
\end{equation}
\end{lemma}

\begin{proof}
For brevity, we first illustrate some key facts related to our quantum algorithm. We denote a set of four specific $n$-bit strings as $T=\{{{x}^{(k)}}={{s}}^{(i)}k{{0}^{n-2i}}:k={k_1}{k_2}\in {{\{0,1\}}^{2}}\}$, and according to the definitions of the functions in Eq.~\eqref{Ksx} and Eq.~\eqref{fsqx} we have
 \begin{equation}\label{}
   lcp(s,{{x}^{(k)}}) =
    \begin{cases}
     2i-2, & {k}_{1}\ne {s}_{2i-1} ;\\
       2i-1, & {{k}_{1}}={{s}_{2i-1}}, {{k}_{2}}\ne {{s}_{2i}} ;\\
       \ge 2i,  &{k_1}{k_2}={s_{2i-1}}{s_{2i}}
    \end{cases}
\end{equation}
by noting the $(2i-2)$-bit prefix of each $x^{(k)}$ is ${{s}}^{(i)}={{s}_{1}}{{s}_{2}}..{{s}_{2i-2}}$ and therefore
 \begin{equation}\label{fsxkqi}
    {{f}_{s}}({x}^{(k)},q^{(i)}=2i-1) =
    \begin{cases}
     0, & \text{$k\ne{{s}_{2i-1}}{{s}_{2i}}$  (three different such  $k$) } ;\\
       1, & \text{$k={{s}_{2i-1}}{{s}_{2i}}$  (only one such $k$) }
    \end{cases}
\end{equation}
for four ${{x}^{(k)}}$ in $T$. Besides, it can be directly verified that for any $k_0\in {\{0,1\}}^{2}$, the operator $R$  in Eq.~\eqref{Rmatrix}
can transform a 2-qubit superposed state into the basis state $\left| k_0 \right\rangle$   as
\begin{equation}\label{Rtransform}
  R: \sum\limits_{k\in {{\{0,1\}}^{2}}}{{\alpha}_k \left| k \right\rangle  }\to \left| k_0 \right\rangle
\end{equation}
when the coefficients are
\begin{equation}\label{alphak}
  \alpha_k=\begin{cases}
     \frac{1}{2}, & k\ne k_0 ;\\
       -\frac{1}{2}, & k=k_0 .
    \end{cases}
\end{equation}
Based on these facts and notations, the overall effect of $M_i$ in Eq.~\eqref{Mi:} is realized by following steps:

(1) Two Hadamard gates $H_{2i-1}H_{2i}$ and the operator $Q_i$ together transform the state ${\left| \psi_0 \right\rangle}$ on the left hand side of Eq.~\eqref{Mi:} into
\begin{equation}\label{psi1}
    \left| {\psi _{1}} \right\rangle =\frac{1}{2}\sum\limits_{k\in {{\{0,1\}}^{2}}}{\left| {s}^{(i)}{k}0^{n-2i} \right\rangle}\left| {q}^{(i)} \right\rangle \\=\frac{1}{2}\sum\limits_{k\in {{\{0,1\}}^{2}}}{\left| x^{(k)} \right\rangle}\left| {q}^{(i)} \right\rangle.
\end{equation}

(2) By utilizing Eq.~\eqref{fsxkqi}, it can be derived the quantum oracle operator $O_s$ defined in Eq.~\eqref{qxOs} can transform the state $\left| {{\psi }_{1}} \right\rangle $ of Eq.~\eqref{psi1} into
\begin{equation}\label{psi2}
  \left| {{\psi }_{2}} \right\rangle = \sum\limits_{k\in {{\{0,1\}}^{2}}}{{\alpha}_k \left| x^{(k)} \right\rangle  }\left| {q}^{(i)} \right\rangle = \left| {{s}^{(i)}}  \right\rangle (\sum\limits_{k\in {{\{0,1\}}^{2}}}{\alpha_k}\left| {k} \right\rangle  ) \left| {0^{n-2i}} \right\rangle
  \left| {q^{(i)}} \right\rangle
\end{equation}
with the coefficients
\begin{equation}
  \alpha_k=\begin{cases}
    \frac{1}{2}, & k\ne {s_{2i-1}}{s_{2i}} ;\\
       -\frac{1}{2}, & k={s_{2i-1}}{s_{2i}} .
    \end{cases}
\end{equation}

(3) In the end, the operator $R$ in 	Eq.~\eqref{Rmatrix} acting on qubits $2i-1$ and $2i$ can transform $\left| {\psi_2} \right\rangle$ of Eq.~\eqref{psi2} into
\begin{equation}\label{psi3}
    \left| {\psi_3} \right\rangle = \left| {{s}^{(i)}}  \right\rangle \left| {s_{2i-1}}{s_{2i}} \right\rangle   \left| {0^{n-2i}} \right\rangle
  \left| {q^{(i)}} \right\rangle
\end{equation}
by using 	Eq.~\eqref{Rtransform} and 	Eq.~\eqref{alphak}, which thus proves Eq.~\eqref{Mi:} of Lemma~\ref{Mi}.
\end{proof}

According to Lemma \ref{Mi}, the subroutines $\{{M_i:i=1,2,...,n/2}\}$ in Figure~\ref{figQuanAL} transforms the initial input state as:
\begin{align*}
    {\left| x={0}^{n} \right\rangle } {\left| q=0 \right\rangle }&{\xrightarrow{{M_1}}} {\left| {x={s_1}{s_2}{0^{n-2}}} \right\rangle } {\left| {q=1} \right\rangle }\\
   &{\xrightarrow{{M_2}}}{\left| {x={s_1}{s_2}{s_3}{s_4}{0^{n-4}}} \right\rangle } {\left| {q=3} \right\rangle }\\
   &{\xrightarrow{{M_3}}}\cdots {\xrightarrow{{M_{n/2}}}} {\left| {x=s} \right\rangle } {\left| {q=n-1} \right\rangle },
\end{align*}
and thus the secret string $s$ can be identified by measuring the $x$ register of the output state. Therefore, the whole quantum algorithm totally employs $n/2$ quantum oracle queries for identifying a secret string $s$, which outperforms  classical deterministic  algorithms that use at least $n$ queries in both the worst and average cases (see Theorem~\ref{theorem1}). This double speedup of our quantum algorithm comes from the intrinsic quantum parallelism, that is, the ability to evaluate four function values $ \{{{f}_{s}}({x}^{(k)},q^{(i)}):k\in {{\{0,1\}}^{2}}\}$ by each quantum oracle query $O_s$ (i.e. from  Eq.~\eqref{psi1} to Eq.~\eqref{psi2} ) and then to extract two bits of desired information $s_{2i-1}s_{2i}$ by interference via the operator $R$ (i.e. from  Eq.~\eqref{psi2} to Eq.~\eqref{psi3} )
in each subroutine $M_i$.

Finally, we address the case with  odd $n$. The idea is to  first obtain the $(n-1)$-bit prefix of the secret $s$ by $(n-1)/2$ quantum queries following the idea in Lemma~\ref{Mi} and then identify
the last bit of $s$ by a classical query. More precisely,
a sequence of $(n-1)/2$  subroutines $\{{M_i:i=1,2,...,(n-1)/2}\}$ constructed in the same way as those in Figure~\ref{figQuanAL} are applied to an input state  $\left|{0}^{n+t}  \right\rangle $ with $t=\left\lceil {\log _2}(n-1) \right\rceil $, and then the $x$ register of the output state is measured to obtain a string $x\in {{\{0,1\}}^{n}}$ revealing ${{s}_{1}}{{s}_{2}}...{{s}_{n-1}}={{x}_{1}}{{x}_{2}}...{{x}_{n-1}}$. Next, one classical query on $(x,q=n-1)$ is performed to identify the last bit of $s$ as
\begin{equation}\label{}
    s_n =
 \begin{cases}
 {{x}_{n}}\oplus 1, &  {{f}_{s}}(x,q=n-1)=0;\\
        {{x}_{n}}, & {{f}_{s}}(x,q=n-1)=1
    \end{cases}
    \label{odd_sn}
\end{equation}
according to  Eq.~\eqref{fsqx}. Therefore, we  employ $(n-1)/2$ quantum queries and 1 classical query together to identify $s$  for odd $n$. Note that there are totally $\left\lceil n/2\right\rceil$ queries. Thus, we have completed the proof of Theorem \ref{theorem2}.
\end{proof}
  In the next section, we demonstrate our quantum algorithm on an IBM cloud quantum computer  for several problem instances and evaluate the experimental performances.

\subsection{Experimental demonstrations on the IBM quantum computer}

For demonstrating our quantum  algorithm in the commonly used circuit model of quantum computation, we would employ a simple set of quantum gates consisting of the single-qubit Pauli $X$ and $Z$ gates, Hadamard gate $H$, $Z$-axis rotation gate ${{R}_{Z}}(\theta )$, square-root of $X$ gate $\sqrt{X}$, together with the two-qubit controlled-NOT (CNOT) gate. Unitary matrices and circuit symbols for these gates are listed in Table~\ref{tab:my_label}.

\begin{table}[htp]
\setlength{\belowcaptionskip}{10pt}
      \caption{Unitary matrices and circuit symbols for quantum gates used in this paper.}
\centering
    \begin{tabular}{ccc}
        \toprule
        \textbf{Quantum Gate}   & \textbf{Unitary Matrix}  & \textbf{Circuit Symbol} \\
        \midrule\\
        Pauli-$X$ & $ X:= \bigg( \begin{array}{cc}
     0 & 1 \\
     1 & 0
\end{array} \bigg)$
    & \Qcircuit  {& \gate{X} & \qw} \\  \\
   Pauli-$Z$ & $
    Z:=\bigg( \begin{array}{cc} 1 & 0 \\ 0 & -1 \end{array} \bigg)$ & \Qcircuit  {& \gate{Z} & \qw} \\
    \\ Hadamard & $
    H:=\frac{1}{\sqrt{2}}\bigg( \begin{array}{cc}1 & 1 \\ 1 & -1 \end{array} \bigg)$ & \Qcircuit  {& \gate{H} & \qw} \\
    \\  $Z$-axis rotation & $
    R_Z(\theta):=\bigg( \begin{array}{cc} e^{-i\theta/2} & 0 \\ 0 & e^{i\theta/2} \end{array} \bigg)$ & \includegraphics[height=0.8 cm]{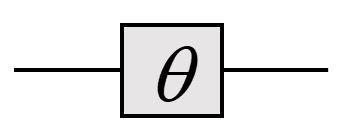} \\
   \\  square-root of $X$ & $
    \sqrt{X}:=\frac{1}{2}\bigg( \begin{array}{cc} 1+i & 1-i \\ 1-i & 1+i \end{array} \bigg)$ & \Qcircuit  {& \gate{\sqrt{X}} & \qw} \\

    \\

         controlled-NOT &
    $CNOT:=\begin{pmatrix} 1 & 0 & 0 & 0 \\ 0 & 1 & 0 & 0 \\ 0 & 0 & 0 & 1 \\ 0 & 0 & 1 & 0 \end{pmatrix}$ &
    \begin{tabular}{c}
          \Qcircuit  {
 & \ctrl{1} &  \qw \\
& \targ &  \qw
}
\end{tabular}\\ \\
        \bottomrule
    \end{tabular}
    \label{tab:my_label}
\end{table}

In order to experimentally demonstrate our quantum algorithm on real quantum hardware, e.g., IBM cloud superconducting quantum computers~\cite{IBM Quantum}, a series of key issues need to be taken into account:

(1) First, the unitary operators $Q_i$, $O_s$, and $R$ shown in Figure~\ref{figQuanAL} are supposed to be explicitly constructed  using basic gates, which is called quantum circuit synthesis. In particular, $Q_i$  can be built from a series of Pauli $X$ gates for converting a basis state $\left| {q}^{(i-1)} \right\rangle$ into $\left| {{q}}^{(i)} \right\rangle$, the quantum oracle $O_s$ defined in 	Eq.~\eqref{qxOs} can be expressed as

\begin{equation}\label{OsMatrix}
    O_s=\sum\limits_{(x,q)\in {\{0,1\}}^{n+t}} {{(-1)^{{f}_{s}(x,q)}}\left| x,q \right\rangle  \left\langle  x,q \right|},
\end{equation}
which is actually a diagonal unitary matrix with diagonal elements $\pm 1$ and can be realized by \{CNOT, $R_Z(\theta)$\} gate set \cite{Bullock2004,Welch2014},
and the operator $R$ in Eq.~\eqref{Rmatrix} can be decomposed into $R=(H\otimes I)(Z\otimes X)\text{CNOT}(H\otimes I)$ as shown in  Fig.~\ref{n=2circuit}(b).

\begin{figure}[H]
    \centering
    \includegraphics[width=0.9\textwidth]{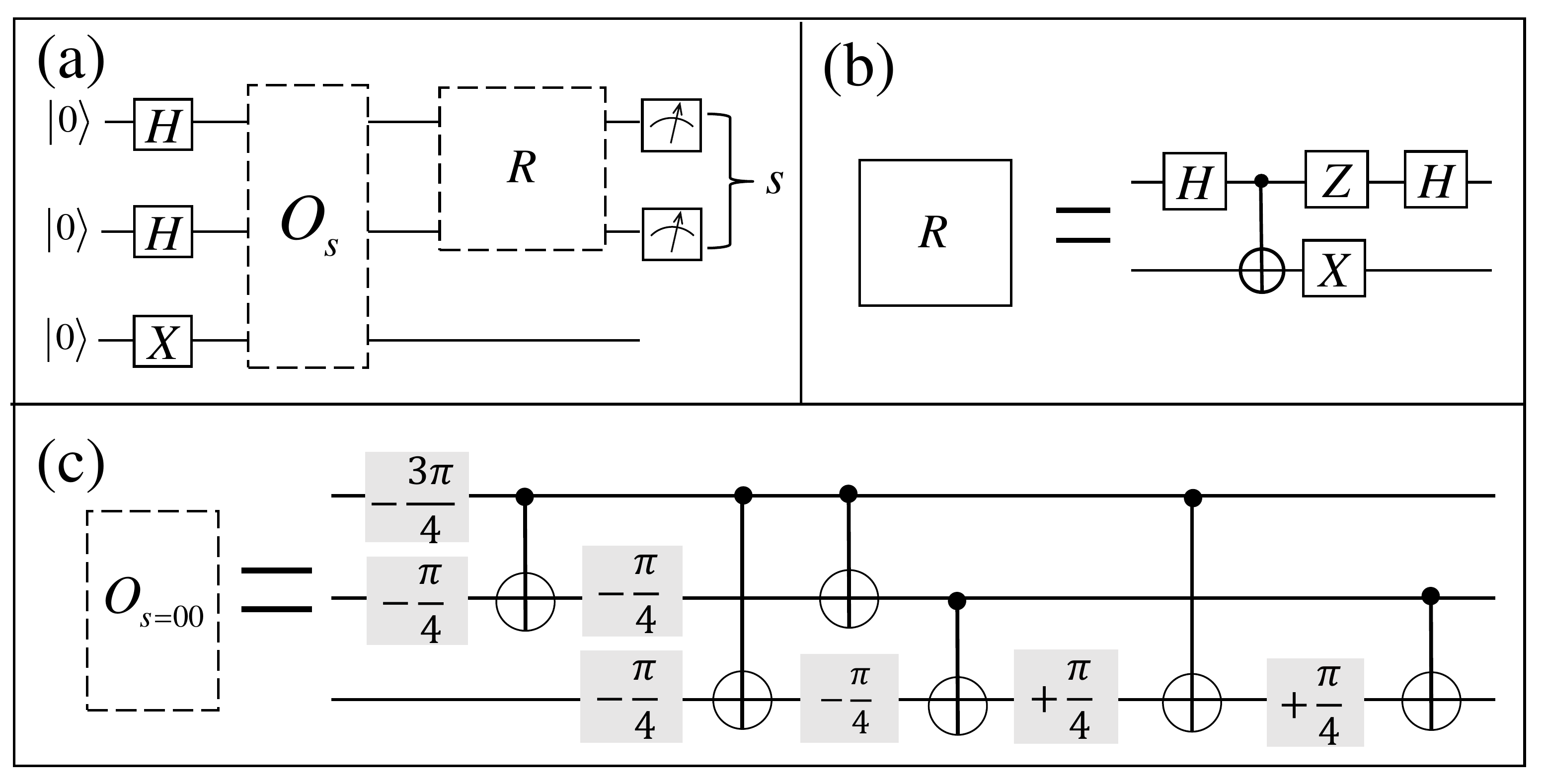}
    \caption{(a) Schematic of the circuit to demonstrate our quantum  algorithm for cases with  $n=2$, with the fixed operator $R$ of Eq.~\eqref{Rmatrix}  shown in (b) and the implementation of an oracle operator example $O_{s=00}$ = diag(-1,-1,-1,1,1,1,1,1) shown in (c). The details of the employed quantum gates are listed in  Table.~\ref{tab:my_label}. }
    \label{n=2circuit}
\end{figure}

(2) Second, considering both available physical gates and restricted qubit connectivity of employed quantum hardware, any synthesized circuit needs to be transpiled into a feasible circuit that can be executed in the practical hardware platform with acceptable outcomes. This essential processing procedure, including gate conversion and qubit mapping strategies, is usually called quantum circuit compilation \cite{Leymann2020,Kusyk2021} and has attracted a great deal of attention in the NISQ era \cite{Martinez2016,Itoko2019,Alam2020,Nash2020,Itoko2020,Bandic2022}. Here we implement customized circuit compilation  procedures aimed at running each synthesized circuit on the hardware named   $ibmq\_quito$ with its topology structure shown in Fig.~\ref{fig_compile}(a) and available gate set $G_{ibm}$ = \{CNOT, $R_Z(\theta)$, $\sqrt{X}$, $X$\}, including a Hadamard gate decomposition $H\simeq {R_Z(\pi/2)}{\sqrt{(X)}}{R_Z(\pi/2)}$ (differing only by an unimportant global phase factor) and a CNOT identity as shown in Fig.~\ref{fig_compile}(b) for implementing a CNOT between two unconnected qubits via four usable CNOT gates.

(3) Finally, we consider to take specific optimization techniques as shown in Fig.~\ref{fig_compile}(c) for further reducing the gate counts as well as circuit depth of compiled circuits, including:
\begin{equation}
\begin{aligned}
   R_Z\  {\rm gate\  merging:}\ &   R_Z{(\theta_2)}R_Z{(\theta_1)}=R_Z{(\theta_1+\theta_2)},\\ {\rm CNOT\ cancellation :}\ & {\rm CNOT} \cdot {\rm CNOT}=I,\\{\rm Commutation\ rules:}\ & {\rm CNOT} (R_Z{(\theta)} \otimes I)=(R_Z{(\theta)} \otimes I) {\rm CNOT},\\ &{\rm CNOT} (I\otimes R_Z{(\theta_2)}) {\rm CNOT} (I\otimes R_Z{(\theta_1)})\\=&
(I\otimes R_Z{(\theta_1)}) {\rm CNOT} (I\otimes R_Z{(\theta_2)}) {\rm CNOT}.
   \end{aligned}
   \label{CircuitOpti}
\end{equation}

\begin{figure}[H]
    \centering
    \includegraphics[width=0.7\textwidth]{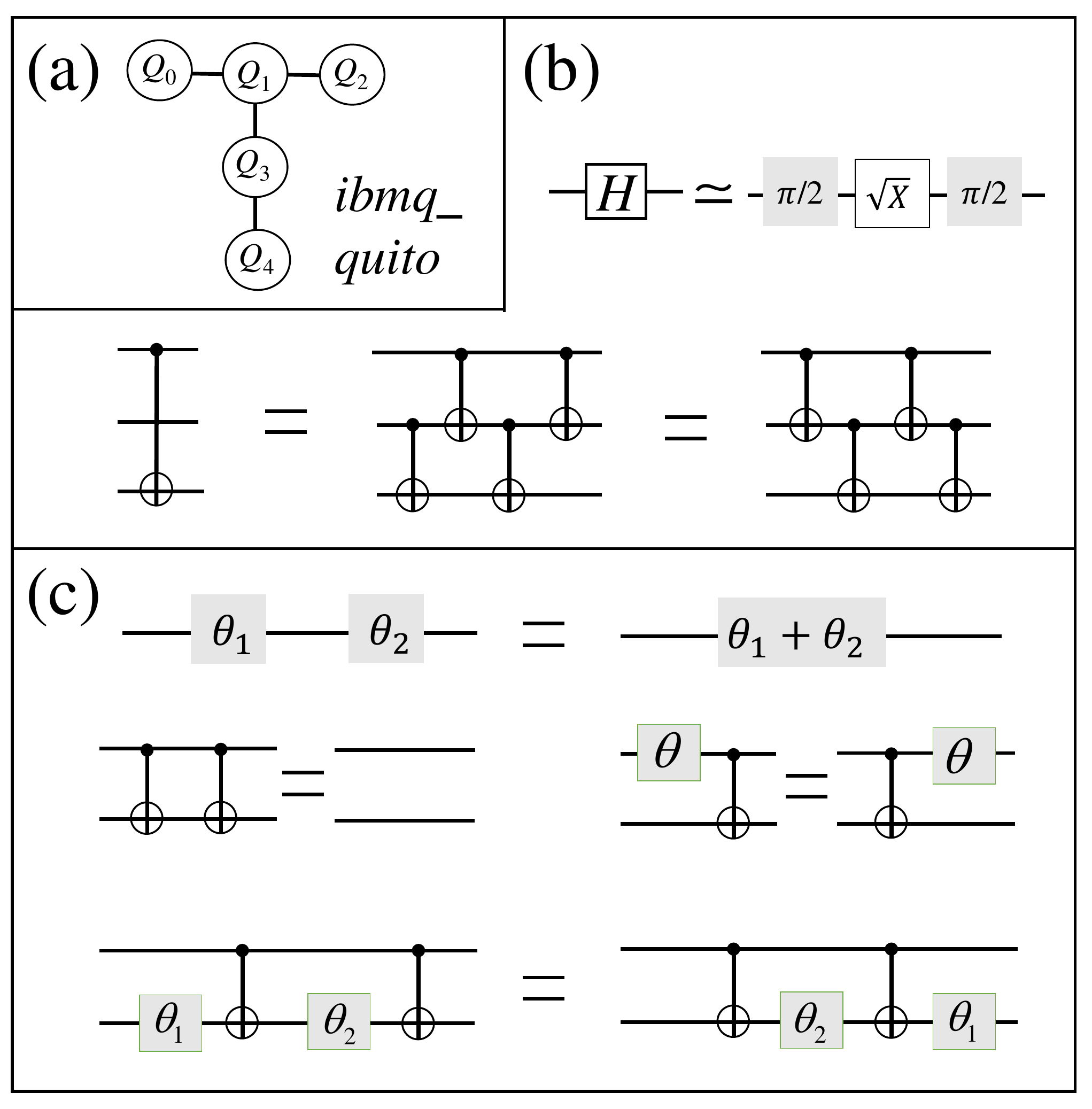}
    \caption{(a)Topology structure of the IBM quantum hardware $ibmq\_quito$. (b) Gate transformations used for quantum circuit compilation. (c) Some useful quantum circuit optimization techniques in Eq.~\eqref{CircuitOpti}. }
    \label{fig_compile}
\end{figure}

Based on the above process, we can bridge the gap between our quantum algorithm described at an abstract level and its practical implementation on real quantum hardware for any problem size in principle. In the following, we demonstrate the even case with $n=2$ and the odd case with $n=3$ on the IBM quantum hardware  $ibmq\_quito$.

For $n=2$, we use a three-qubit circuit with $n=2$ and $t=1$ for performing our quantum algorithm as shown in Fig.~\ref{n=2circuit}(a),  where the operator $Q_1$ acting on the $q$ register is a $X$ gate and each problem instance $s$ determines a specific oracle operator $O_s$ in Eq.~\eqref{OsMatrix}. For example, in Fig.~\ref{n=2circuit}(c) we present a construction of an oracle example $O_{s=00}$ = diag(-1,-1,-1,1,1,1,1,1) using six CNOT gates and seven ${R_Z}(\theta)$ gates. Next, we compile this circuit into a new one that fits the ${Q_0}-{Q_1}-{Q_2}$ linear structure and gate set $G_{ibm}$ of $ibmq\_quito$ using the equivalent gate transformations in Fig.~\ref{fig_compile}(b)
followed by the application of optimization techniques in Fig.~\ref{fig_compile}(c), which would lead to a final transpiled circuit consisting of 9 CNOT, 10 $R_Z(\theta)$, 4 $\sqrt{X}$, and 2 $X$ gates
with a circuit depth 15 as shown in Fig.~\ref{n=2circuitdata}(a). Besides, the oracle constructions and transpiled circuits for other three instances $s=01$, 10, and 11 have similar forms as provided in Supplementary Information \ref{suppA}.

\begin{figure}[H]
    \centering
    \includegraphics[width=0.95\textwidth]{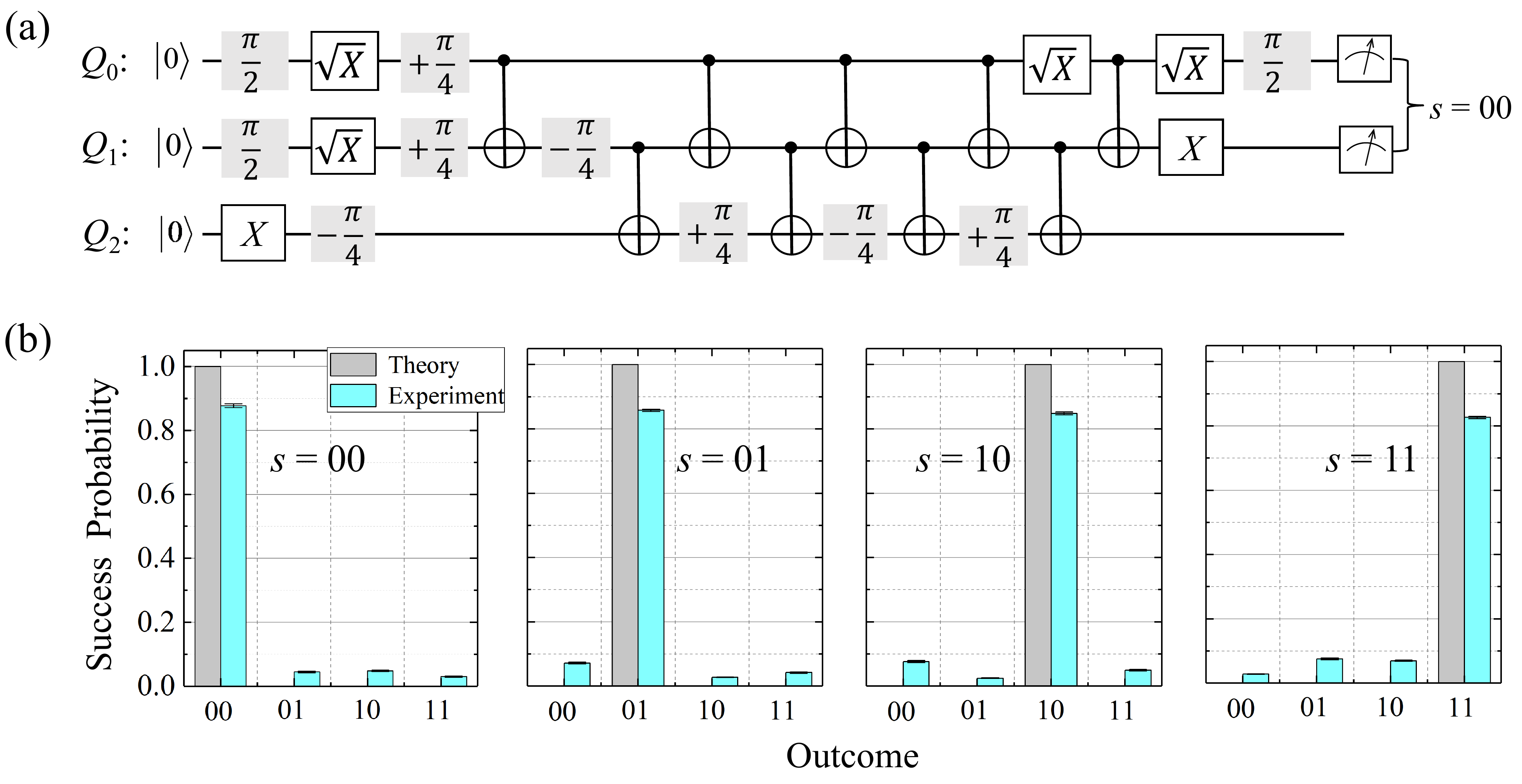}
    \caption{(a) Transpiled circuit acting on three physical qubits \{$Q_0,Q_1,Q_2$\} of $ibmq\_quito$ for identifying secret $s=00$ as an example, which is obtained from the original circuit in Fig.~\ref{n=2circuit} through compilation and optimization procedures in Fig.~\ref{fig_compile}. (b) Experimental results for demonstrating all instances $s\in {\{0,1\}}^2$ via their associated transpiled circuits. The success probabilities are $87.7(6)    \%$, $85.8(3)    \%$, $84.8(4) \%$, and $82.6(3) \%$  for  $s=00$, 01, 10, and 11, respectively,  with 5 trials and the sample size 8192 ($5\times 8192$ realizations for each $s$). }
    \label{n=2circuitdata}
\end{figure}

All four instances for $n=2$ are tested in our experiment, where we employ a figure of merit called the algorithm success probability (ASP) to denote the probability of correctly identifying the target string as the experiment outcome.
For demonstrating an instance $s\in \{0,1\}^{2}$ on the device, we experimentally implement 5 trials and in each trial we consider 8192 repetitions of the transpiled circuit to record outcome data for calculating the ASP. At the time the device $ibmq\_quito$ was accessed, the average CNOT error was $1.080\times 10^{-2}$ and average readout error was $4.424\times 10^{-2}$, average $T_1=98.55 \mu s$ (decay time from the excited state to ground state), average  $T_2=101.74 \mu s$ (decoherence time). As the distributions over possible measurement outcomes show in Fig.~\ref{n=2circuitdata}(b),
the  success probabilities for identifying secret $s=00$, 01, 10, and 11 in the experiment are $87.7(6)    \%$, $85.8(3)    \%$, $84.8(4) \%$, and $82.6(3) \%$, respectively. As an overall evaluation, the average success probability defined over all problem instances is $85.3\%$ in the above experimental demonstration.

\begin{figure}[H]
    \centering
    \includegraphics[width=0.9\textwidth]{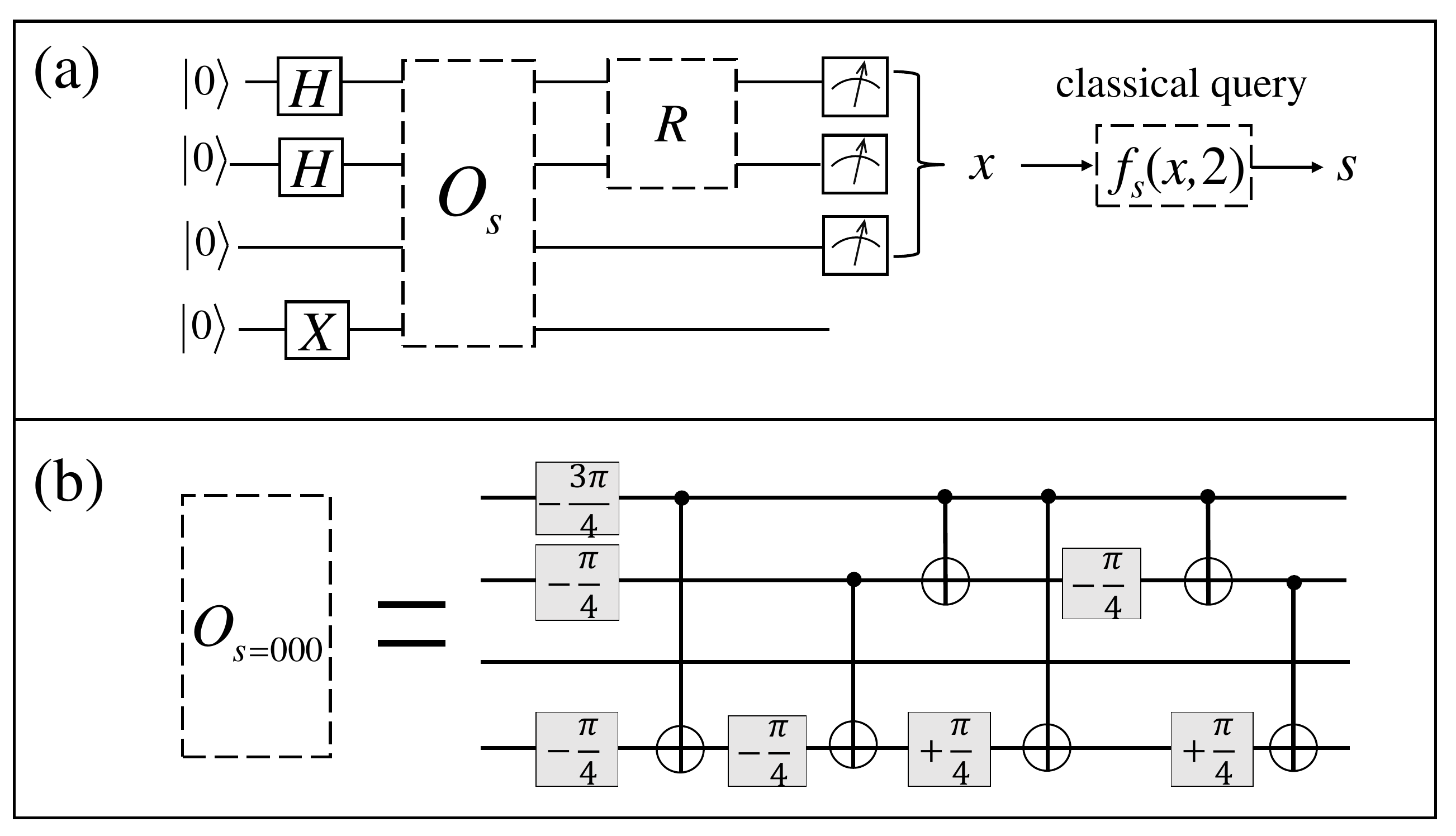}
    \caption{(a) Schematic of the quantum circuit followed by a classical query to demonstrate our quantum  algorithm for cases with  $n=3$,  with the fixed operator $R$ shown in Fig.~\ref{n=2circuit}(b) and implementation of an oracle operator example $O_{s=000}$ = diag(-1,-1,-1,-1,-1,1,-1,1,1,1,1,1,1,1,1,1) shown in (b). }
    \label{n=3circuit}
\end{figure}

For $n=3$, we use a four-qubit circuit with $n=3$ and $t=1$ followed by a classical query for performing our quantum algorithm as shown in Fig.~\ref{n=3circuit}(a). For example, in Fig.~\ref{n=3circuit}(b) we present a construction of an oracle example $O_{s=000}$= diag(-1,-1,-1,-1,-1,1,-1,1,1,1,1,1,1,1,1,1) using six CNOT gates and seven ${R_Z}(\theta)$ gates.  Next, we compile this circuit by assigning four qubits of Fig.~\ref{n=3circuit}(a) to physical qubits $\{{Q_0},{Q_1},{Q_2},{Q_3}\}$ in the T-shape structure of $ibmq\_quito$ in Fig.~\ref{fig_compile}(a)
and then using circuit transformation techniques in Figs.~\ref{fig_compile}(b) and  \ref{fig_compile}(c), which would lead to a final transpiled circuit consisting of 9 CNOT, 10 $R_Z(\theta)$, 4 $\sqrt{X}$, and 2 $X$ gates  with a circuit depth 15 as shown in Fig.~\ref{n=3circuitdata}(a). Also, the oracle constructions and transpiled circuits for other instances $s$ have similar forms as provided in Supplementary Information \ref{suppB}. For completeness, we experimentally perform our algorithm  by taking 5$\times$8192 repetitions for each one of the eight instances $s\in \{0,1\}^3$ , and the success probabilities for identifying each instance are $85.3(7)\%$, $85.3(4)\%$, $83.0(9)\%$, $82.2(6)\%$, $82.3(5)\%$, $82.4(1)\%$, $79.7(4)\%$ and $79.5(5)\%$ as recorded in Fig.~\ref{n=3circuitdata}(b), respectively.
As an overall evaluation, the average success probability defined over all problem instances is $82.5\%$ in our experiment.

\begin{figure}[H]
    \centering
    \includegraphics[width=0.95\textwidth]{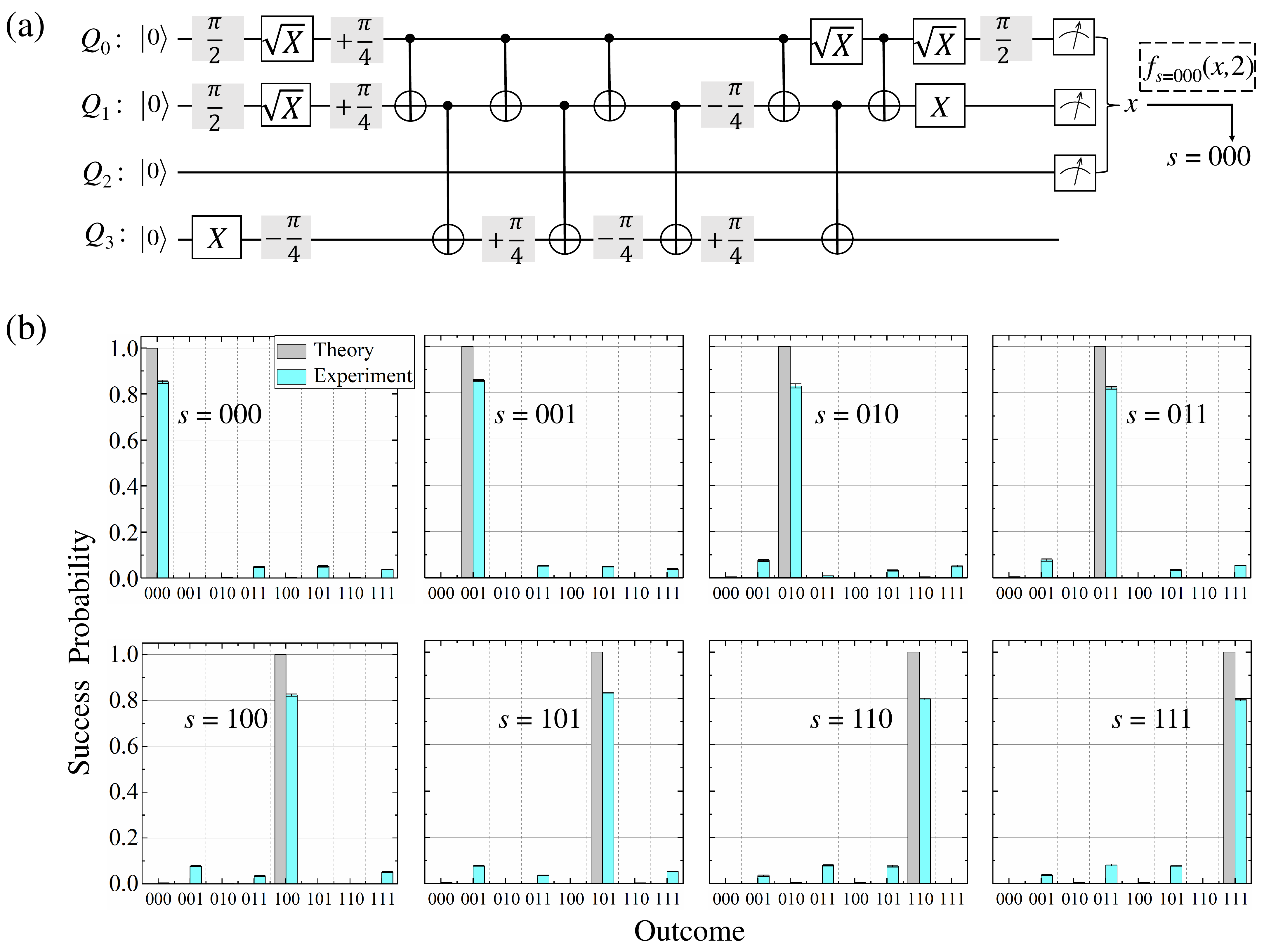}
    \caption{(a) Transpiled circuit acting on four physical qubits \{$Q_0,Q_1,Q_2,Q_3$\} of $ibmq\_quito$ followed by a classical query $f_{s=000}(x,2)$ for identifying secret $s=000$ as an example, which is obtained from the original circuit in Fig.~\ref{n=3circuit} through compilation and optimization procedures in Fig.~\ref{fig_compile}. (b) Experimental results for demonstrating all eight instances  $s\in {\{0,1\}}^3$ via their associated transpiled circuits together with a classical query, where the corresponding success probabilities are $85.3(7)\%$, $85.3(4)\%$, $83.0(9)\%$, $82.2(6)\%$, $82.3(5)\%$, $82.4(1)\%$, $79.7(4)\%$ and $79.5(5)\%$  , respectively,  with 5 trials and the sample size 8192  for each $s$.  }
    \label{n=3circuitdata}
\end{figure}

\section{Discussion}
\label{Discussion}

In this study, we discussed how to learn a secret  string $s\in {{\{0,1\}}^{n}}$ by querying the oracle  $f_{s}(x,q)$ that indicates  whether the length of the longest common prefix  of $s$ and $x$ is greater than $q$ or not. We first proved  that the  classical  query complexity of this problem is $n$ in both the worst and average case as well  as gave an optimal classical deterministic algorithm. Then, we proposed an exact quantum algorithm with $\left\lceil n/2\right\rceil$ query complexity for solving any problem instance, which thus provided a double speedup over its classical counterparts. Finally, we experimentally demonstrated our quantum algorithm on an IBM cloud quantum device by utilizing specific circuit design and compilation techniques, where the average success probability for all instances with $n=2$ or $n=3$ reached $85.3\%$ or $82.5\%$, respectively. Also, experiments on our quantum algorithm for larger-scale secret string   problems are likely to be performed by following a similar way. Besides, the $q$ register used in our quantum
algorithm can be alternatively represented by higher-dimensional qu$d$it $(d>2)$ systems for experimental demonstrations, e.g., using quantum photonic \cite{Erhard2020} or nuclear magnetic resonance (NMR) systems \cite{Gedik2015}. Since it is not yet clear whether our quantum algorithm is optimal, in future work we would consider an interesting open problem about the lower bound of exact or bounded-error quantum query complexity for this secret-string-learning problem.

\section*{Data Availability}
All the data supporting  the results of this work are available from the
 authors upon reasonable request.

\section*{Code Availability}The codes for designing circuits in this work are available from the authors  upon reasonable request.

\section*{Acknowledgements}
This work was supported by the National Natural Science Foundation of China (Grant Nos. 62102464, 61772565),
 the Guangdong Basic and Applied Basic Research Foundation
(Grant No. 2020B1515020050), the Key Research and Development
project of Guangdong Province (Grant No. 2018B030325001), and Project funded by
China Postdoctoral Science Foundation (Grant Nos. 2020M683049,
2021T140761).

\section*{Author Contributions}
L. Li conceived this problem. Y. Xu and S. Zhang  designed the quantum algorithm together. Y. Xu put forward the classical algorithm and proved its optimality, S. Zhang performed the quantum experiments and data processing.
L. Li and S. Zhang supervised the project, and all authors wrote the manuscript. Y. Xu and S. Zhang contributed equally to this work.

\section*{Competing Interests}
The authors declare no competing interests.

\newpage

\appendices
\setcounter{table}{0}
\setcounter{figure}{0}

\renewcommand{\thetable}{S\arabic{table}}
\renewcommand*{\theHtable}{\thetable}
\renewcommand{\thefigure}{S\arabic{figure}}
\renewcommand*{\theHfigure}{\thefigure}

\begin{center}
    \section*{Supplementary Information}
\end{center}

As mentioned in the main text, we experimentally demonstrate our quantum algorithm for all cases with $n=2$ and $n=3$ on the IBM quantum device named $ibmq\_quito$, and the calibration parameters for the day the device was accessed are shown in Table~\ref{quitoparam}. For completeness, in the following we give detailed descriptions of all employed quantum circuits.

\begin{table}[H]
\setlength{\belowcaptionskip}{10pt}
\caption{Calibration parameters of $ibmq\_quito$ from the IBM quantum website. In the CNOT error column, $i\_j$ indicates that the physical qubits $i$ and $j$ are the control and target for the CNOT gate, respectively.}
\resizebox{\linewidth}{!}{
\begin{tabular}{|c|c|c|c|c|c|c|c|c|c|}
\hline
Qubit & T1 (us) & T2 (us) & Freq. (GHz) & Anharm. (GHz) & Readout error & ID error & $\sqrt{x}$ (sx) error & Pauli-X error & CNOT error                   \\ \hline
Q0    & 79.19      & 126.78 &  5.301  & -0.33148     &$3.81 \times 10^{-2}$ & $3.23 \times 10^{-4}$  &$ 3.23 \times 10^{-4}$ & $3.23\times 10^{-4}$    & $0\_1:7.401\times 10^{-3}$             \\ \hline
Q1    &  117.96    &  132.4 &   5.081 & -0.31925   & $4.11\times 10^{-2}$   & $2.90 \times 10^{-4}$ &$2.90  \times 10^{-4}$ & $2.90\times 10^{-4}$      &\begin{tabular}{c} $1\_3:1.044 \times 10^{-2};$ \\$ 1\_2:6.435 \times 10^{-3};$\\ $1\_0:7.401 \times 10^{-3} $\end{tabular}  \\ \hline
Q2    &  95.79     &   115.86 & 5.322 &  -0.33232   & $7.17\times 10^{-2}$  & $2.74 \times 10^{-4}$ &$2.74  \times 10^{-4}$     &$2.74\times 10^{-4}$      & $2\_1:6.435 \times 10^{-3}$ \\ \hline
Q3    & 107.55   &  22.83 &   5.164   &  -0.33508   &  $3.41\times 10^{-2}$ & $3.44 \times 10^{-4}$ &$3.44 \times 10^{-4}$     &  $3.44\times 10^{-4}$    & \begin{tabular}{c} $3\_4:1.890 \times 10^{-2};$ \\$ 3\_1:1.044 \times 10^{-2}$\end{tabular}  \\ \hline
Q4    &  92.27   & 110.84  &    5.052 &  -0.31926   & $3.62\times 10^{-2}$  & $4.57 \times 10^{-4}$& $4.57 \times 10^{-4}$     & $4.57\times 10^{-4}$    &    $ 4\_3:1.890 \times 10^{-2}$         \\ \hline
\end{tabular}}\label{quitoparam}
\end{table}

\section{Details of quantum circuits for demonstrating cases with $n=2$}\label{suppA}

As introduced in the main text, the  designed quantum oracle and the whole transpiled circuit for $s=00$ have been  presented in Fig.~\ref{n=2circuit}(c) and Fig.~\ref{n=2circuitdata}(a) of main text, respectively. Similarly, the oracle constructions of $O_s$ in Eq.~\eqref{OsMatrix} of the main text and corresponding transpiled circuits for demonstrating other three cases $s$=01, 10, and 11 are explicitly shown in Fig.~\ref{figS1} and Fig.~\ref{figS2}, respectively. The experimental results for testing these cases are plotted in Fig.~\ref{n=2circuitdata} (b) of the main text.

\begin{figure}[H]
    \centering
    \includegraphics[width=0.8\textwidth]{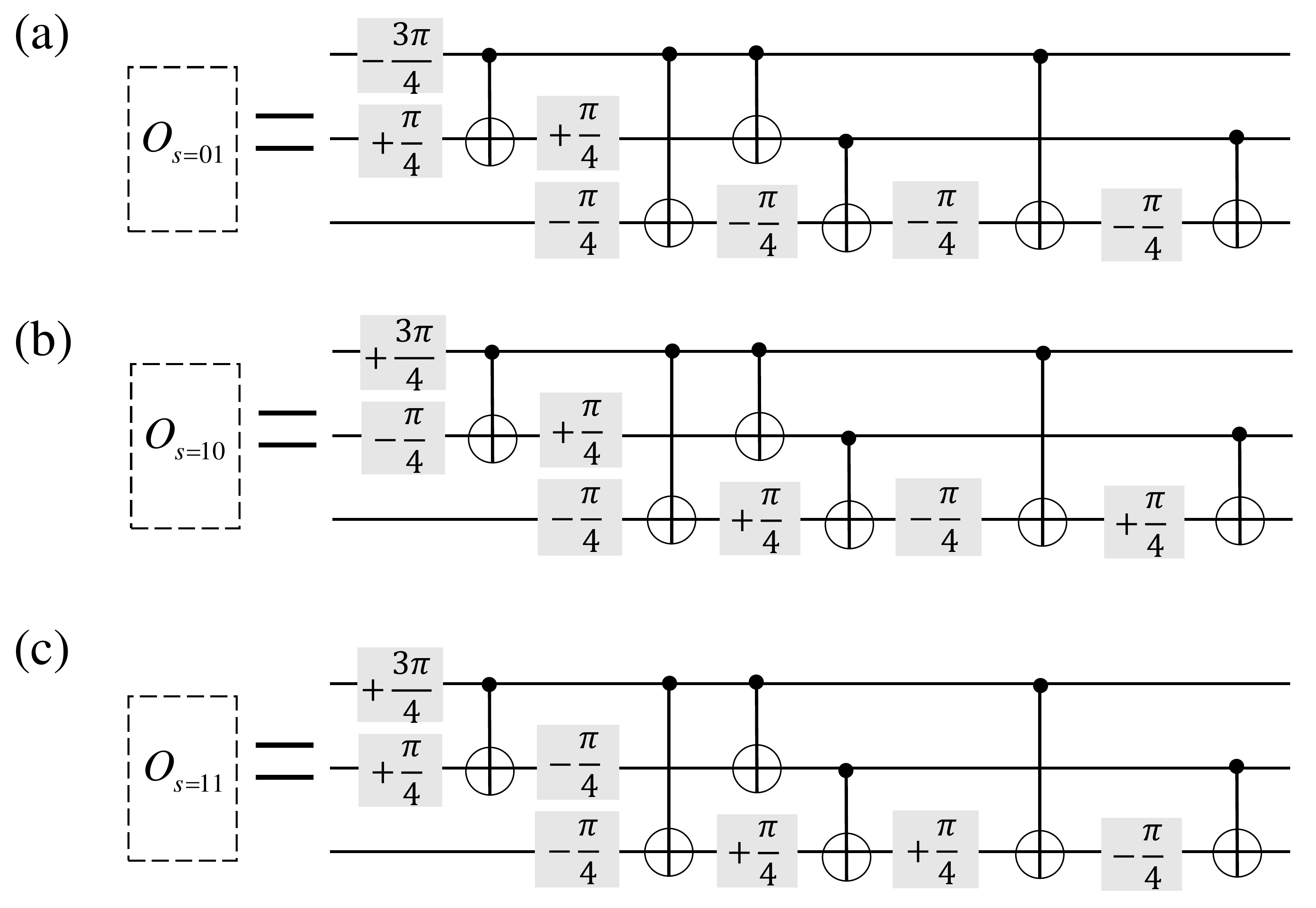}
    \caption{Quantum oracle constructions of (a)  $O_{s=01}$ = diag(-1,1,-1,-1,1,1,1,1), (b)  $O_{s=10}$ = diag(1,1,1,1,-1,-1,-1,1),  and (c)  $O_{s=11}$ = diag(1,1,1,1,-1,1,-1,-1), respectively. }
    \label{figS1}
\end{figure}

\begin{figure}[H]
    \centering
    \includegraphics[width=0.9\textwidth]{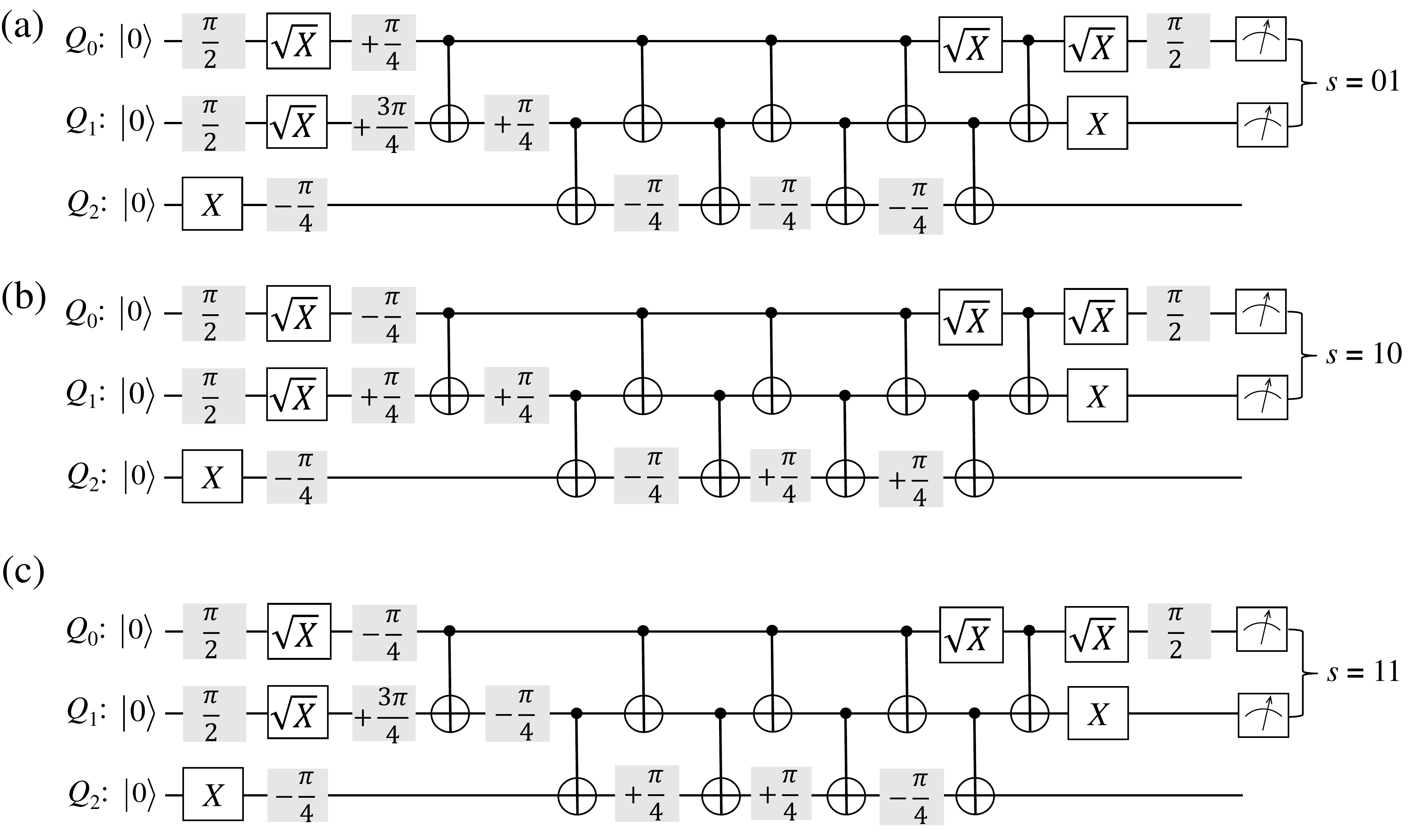}
    \caption{Overall transpiled circuits  for demonstrating (a) $s=01$, (b) $s=10$, and (c) $s=11$ on $ibmq\_quito$, respectively. }
    \label{figS2}
\end{figure}

\section{Details of quantum circuits for demonstrating cases with $n=3$}\label{suppB}

For demonstrating cases with $n=3$, two features of our algorithm are noteworthy. First, it can be verified that the quantum oracles are the same for any two instances ${s_1}{s_2}0$ and ${s_1}{s_2}1$ as shown in Fig.~\ref{figS3}.  Therefore, we only need to design and compile four different quantum circuits in total, that is, Fig.~\ref{n=3circuitdata}(a) in the main text  as well as Fig.~\ref{figS4} below to identify two bits ${s_1}{s_2}\in \{0,1\}^2$, supplemented with a final classical query $f_s(x,2)$ to identify $s_3$ of secret $s$. For example, the transpiled quantum circuit inside Fig.~\ref{n=3circuitdata}(a) of the main text together with a query ${f_{s=001}(x,2)}$ on the measured outcome string $x$ can be used for identifying secret $s=001$. Second, according to Eq.~\eqref{odd_sn} of the main text, our algorithm can correctly identify the secret string $s$ when the outcome measured from the quantum circuit in imperfect experiments is  ${s_1}{s_2}{s_3}$ or ${s_1}{s_2}{(s_3\oplus{1}})$, leading to the experimental success probabilities  recorded in Fig.~\ref{n=3circuitdata}(b) of main text for identifying each $s$.

\begin{figure}[H]
    \centering
    \includegraphics[width=0.8\textwidth]{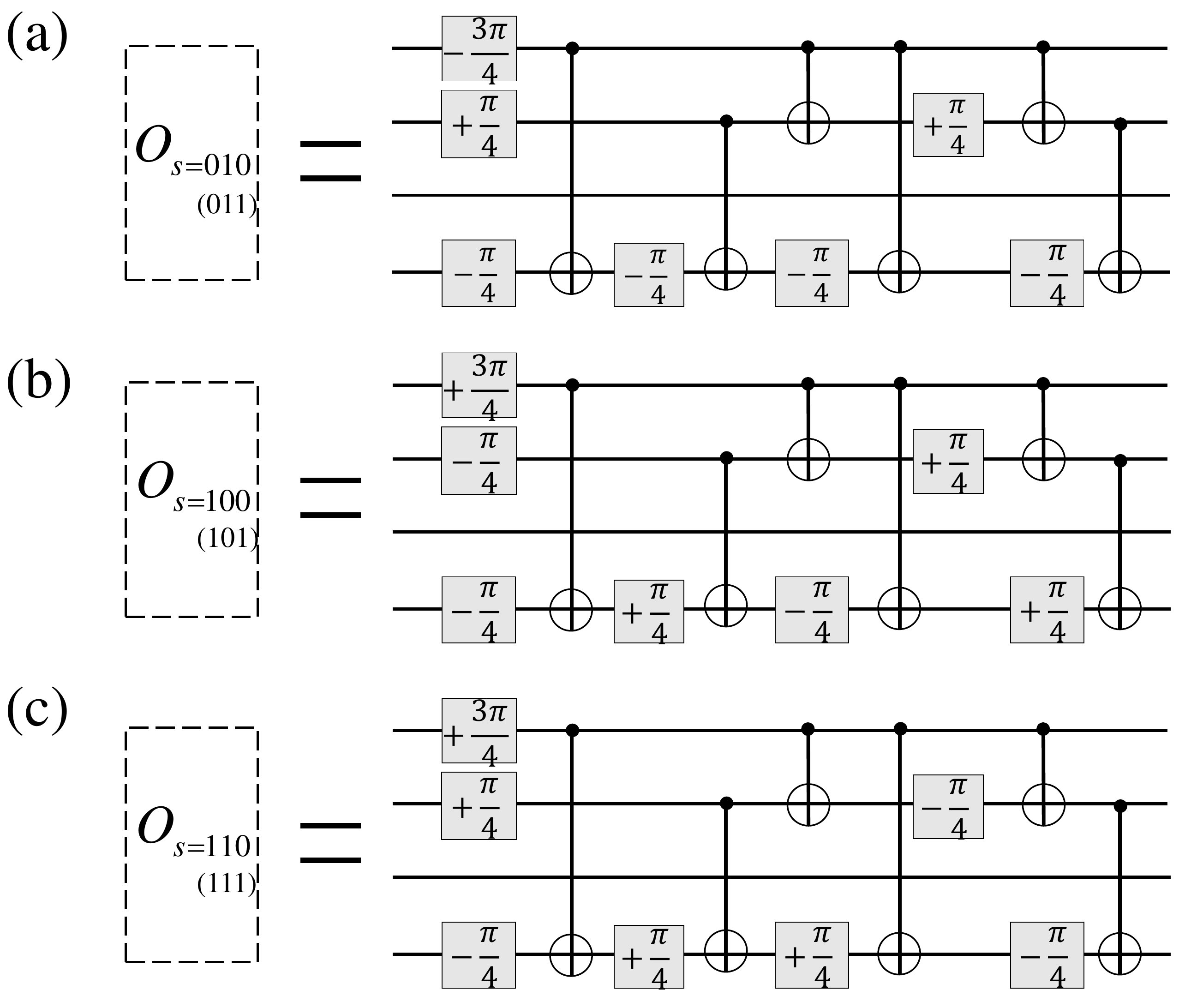}
    \caption{Quantum oracle constructions of (a) $O_{s=010(011)}$= diag(-1,+1,-1,+1,-1,-1,-1,-1,+1,+1,+1,+1,+1,+1,+1,+1 ), (b) $O_{s=100(101)}$= diag(+1, +1, +1, +1,  +1, +1, +1, +1,   -1,-1,-1,-1,-1,+1,-1,+1), and  (c) $O_{s=110(111)}$= diag(+1, +1, +1, +1,  +1, +1, +1, +1,   -1,+1,-1,+1,-1,-1,-1,-1), respectively. }
    \label{figS3}
\end{figure}

\begin{figure}[H]
    \centering
    \includegraphics[width=0.9\textwidth]{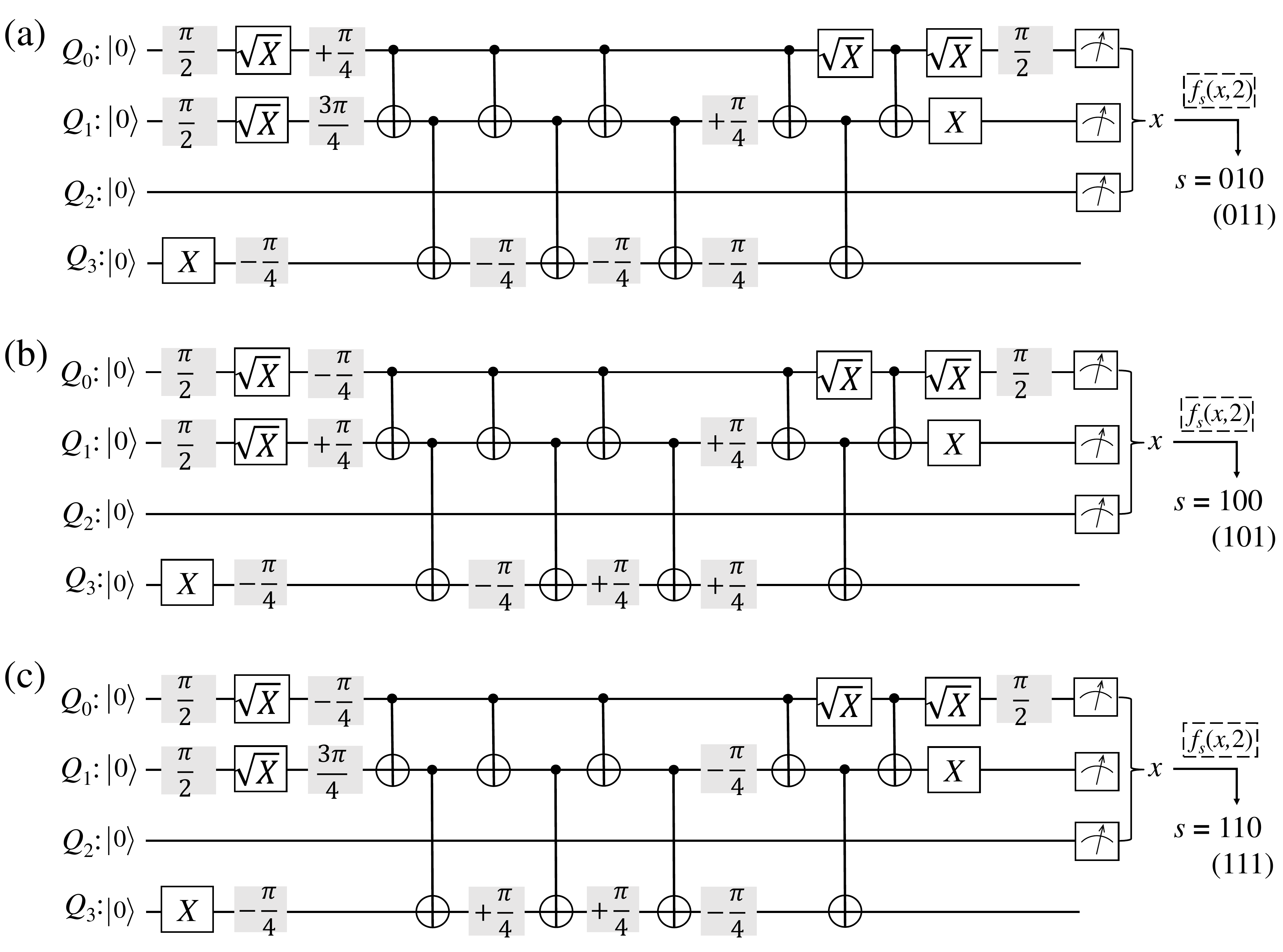}
    \caption{Overall transpiled quantum circuit acting on four physical qubits $\{Q_0,Q_1,Q_2,Q_3\}$ of $ibmq\_quito$ followed by a classical query $f_s(x,2)$ to identify (a) $s$=010 or 011, (b) $s$=100 or 101, and (c) $s$=110 or 111. }
    \label{figS4}
\end{figure}

\end{document}